\documentclass{ifacconf}

\usepackage{graphics} 
\usepackage{epsfig} 
\usepackage{amsmath} 
\usepackage{amssymb}  
\usepackage{mathtools}
\usepackage{fancyhdr}
\usepackage{url}
\usepackage[dvipsnames]{xcolor}
\usepackage{natbib}        

\definecolor{antiquefuchsia}{rgb}{0.57, 0.36, 0.51}
\definecolor{DarkGreen}{rgb}{0.57, 0.8, 1}

\newtheorem{assumption}{Assumption}
\newtheorem{remark}{Remark}

\allowdisplaybreaks[4]

\begin{document}
\begin{frontmatter}

\title{Embodied Opinion Dynamics \\for
Safety-Critical Motion Control \\in Dynamic
Environments } 


\author[First]{Zhiqi Tang} 
\author[Second]{Yu Xing} 

\address[First]{Department of Electrical and Electronic Engineering,   
The University of Manchester, UK (e-mail: zhiqi.tang@manchester.ac.uk)}
\address[Second]{The Faculty of Computer Science, RWTH Aachen University, Aachen, Germany (e-mail: yu.xing@rwth-aachen.de)}

\begin{abstract}                
This paper proposes a novel adaptive control framework that embeds nonlinear opinion dynamics within the dynamical sensorimotor layers of an automated vehicle governed by second-order nonholonomic bicycle kinematics. The framework enables an ego vehicle to perform adaptive decision-making and achieve safe motion control under interaction uncertainty with non-cooperative neighboring agents. We consider a representative case study in which an ego vehicle autonomously attempts to merge into a lane occupied by human-driven or automated vehicles whose intentions are unknown. Within the proposed framework, the ego vehicle adaptively selects and executes merging versus non-merging behaviors in response to changing environmental conditions. Formal safety guarantees, as well as equilibrium and stability analyses of the closed-loop system, are provided. Numerical simulations further demonstrate the effectiveness of the proposed approach.
\end{abstract}

\begin{keyword}
Nonlinear Opinion Dynamics, Reactive Collision Avoidance, Adaptive Control, Automated Vehicle
\end{keyword}

\end{frontmatter}

\section{Introduction}

Dynamical systems have provided valuable insights into how decision-making emerges in various biological contexts, including human and honeybee swarms \citep{gray2018multiagent}. These decision-making mechanisms are often value-based, such that the agent makes a decision when a specified variable crosses a static or dynamic threshold. Recent developments in nonlinear opinion dynamics \citep{bizyaeva2022nonlinear} have shown advantages in its flexible, tunable sensitivity, allowing the model to be either sensitive or robust to input.   However, these models generally neglect the agent’s control over its own physical dynamics. It remains an open question how to better systematize the advantages of these decision-making mechanisms in closed-loop control of physical systems. 


Inspired by the embodied decision-making of biological organisms whose sensory input, actions, and cognitive processes are
interconnected \citep{lepora2015embodied}, \cite{reverdy2018dynamical} and \cite{reverdy2021motivation} propose a motivation
dynamics framework (a dynamical systems framework) for a single robot to plan and execute recurrent coverage tasks. In the framework, value-based decision-making mechanisms are embedded directly within the dynamical sensorimotor layers of autonomous physical systems. Similarly, \cite{amorim2024threshold} propose a threshold-based decision-making framework for agents dynamically choosing between two spatial tasks that accounts for the agent's physical state. A key advantage of using a continuous motivation state in these approaches is the ability to encode a much richer set of behaviors than what is possible using a discrete switching variable. This richer behavioral repertoire has been also shown to facilitate deadlock resolution in robot navigation tasks. For instance, \cite{cathcart2023proactive} propose a proactive social-navigation framework in which a robot forms an “opinion’’ regarding how and in which direction to pass a human mover. Likewise, \cite{qi2025integrating} exploit nonlinear opinion dynamics within a safety controller to achieve blocking resolution without communication or predefined rules.

All of the works mentioned above consider only elementary robot dynamics, such as single-integrator models. Moreover, except for \cite{reverdy2018dynamical}, these approaches do not provide stability analysis and convergence guarantees for the physical state in the closed-loop system. Consequently, it remains an open problem to develop a general motivation-dynamics framework that provides formal performance guarantees for robots with more complex and realistic dynamics operating in dynamic environments.

Inspired by the approach proposed by \cite{reverdy2018dynamical}, this paper introduces a novel adaptive control framework that integrates nonlinear opinion dynamics within the dynamical sensorimotor layers of an automated vehicle governed by second-order kinematics of a nonholonomic bicycle model. The proposed framework is tailored for dynamic environments populated by non-cooperative neighboring agents, enabling the vehicle to perform adaptive decision-making and achieve safe motion control under interaction uncertainty.

Specifically, we consider a representative case study of a single ego vehicle autonomously merging into a lane occupied by a group of neighboring vehicles when the opportunity arises, while maintaining safety at all times. The primary challenge arises from that these neighboring vehicles may consist of both human-driven and autonomous vehicles that cannot coordinate with or communicate with the ego vehicle.  As a result, their intentions remain unknown, and they may or may not yield sufficient space for the ego vehicle to merge. Within the proposed framework, illustrated in Fig. \ref{fig:framework}, the opinion state of the ego vehicle represents its preference for executing a merging maneuver into the neighboring lane. This opinion is modeled as a dynamical system driven by the ego vehicle’s physical state relative to surrounding traffic. Conversely, the ego vehicle’s motion controller is informed by this opinion state, ensuring that the vehicle’s desired behavior emerges from the coupling between decision dynamics and physical dynamics.

\begin{figure}[!htb]
	\centering
	\includegraphics[scale = 0.45]{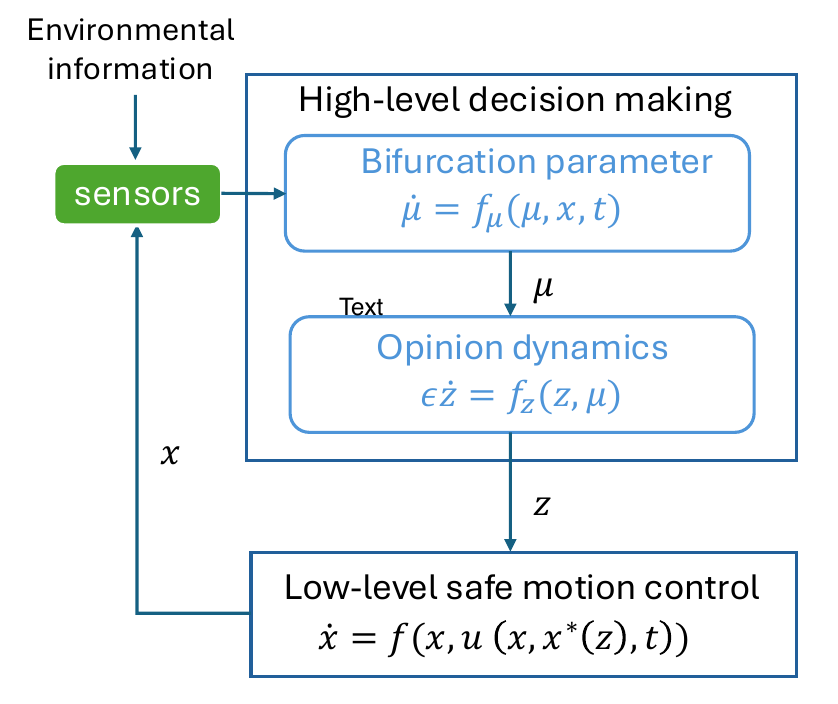}	
	\caption{The proposed control framework.}
		\label{fig:framework}
\end{figure}

In summary, the main contributions of the paper are as follows:
\begin{enumerate}
\item We propose a novel control framework that couples nonlinear opinion dynamics with the second-order nonholonomic bicycle dynamics of an ego vehicle, enabling the adaptive selection and execution of merging versus non-merging behaviors when interacting with non-cooperative neighboring agents.
\item We provide formal safety guarantees and equilibrium analysis by leveraging the constructive formulation of dissipative barrier feedback for safety-critical control.
\item We establish the stability of the overall closed-loop system by exploiting the time-scale separation between the opinion dynamics and the vehicle dynamics.
\end{enumerate}



The remainder of this paper is organized as follows. In Section~\ref{sec:problem}, we introduce the vehicle model and formulate the problem addressed in this work. Section~\ref{sec:design} presents the detailed design of the proposed framework along with the associated theoretical results. In Section~\ref{sec:simu}, we demonstrate the effectiveness of the method through numerical evaluations. Finally, Section~\ref{sec:conclusion} provides concluding remarks.


\section{ Vehicle model and Problem formulation}\label{sec:problem}
This section introduces the ego vehicle model and formally states the main problem addressed in this paper.


\subsection{Vehicle model}
\begin{figure}[t]
	\centering	\centerline{\includegraphics[trim={8cm 5cm 8cm 7cm},clip,width=1\linewidth]{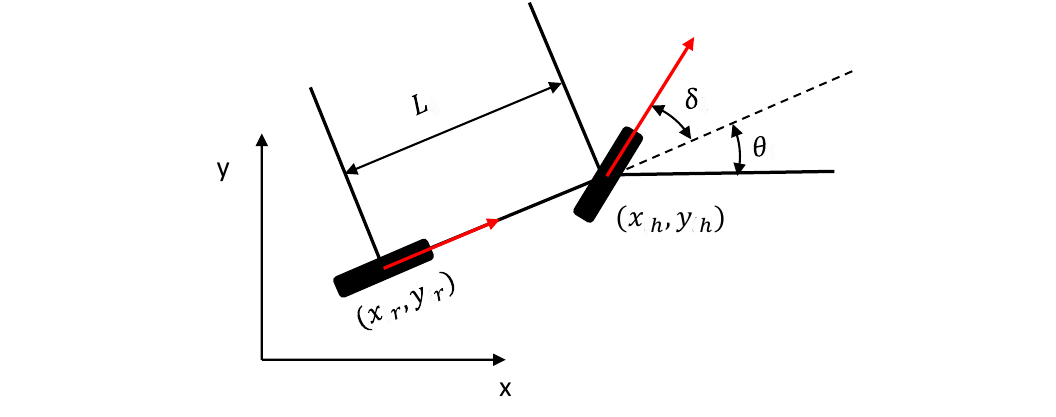}}
	\caption{Kinematic bicycle model for 2-dimensional vehicle motion.}
	\label{fig:bicycle}
\end{figure}
The ego vehicle is modeled using the second-order kinematics of a nonholonomic bicycle, as follows:

\begin{equation}\label{eq.bicycle_model}
\begin{bmatrix}
\dot{x}_{r}\\
\dot{y}_{r}\\
\dot{\theta}\\
\dot{v}_{r}\\
\dot{\delta}\\
\end{bmatrix}
=
\begin{bmatrix}
v_{r} \cos(\theta)\\
v_{r} \sin(\theta)\\
\frac{v_{r} \tan(\delta)}{L}\\
0\\
0\\
\end{bmatrix}
+\begin{bmatrix}
0 & 0\\
0 & 0\\
0 & 0\\
1 & 0\\
0 & 1\\
\end{bmatrix}
\begin{bmatrix}
a \\
\omega \\
\end{bmatrix}
\end{equation}
where $(x_{r}, y_{r})$ and $\theta$ indicate the rear axle center position and orientation of the ego vehicle in the common global frame, $v_{r}$ is the speed measured at the rear wheel, and finally $\delta$ , $L$ are the steering angle and wheelbase of vehicle, respectively, as shown in Fig. \ref{fig:bicycle}. The control inputs of the system are the longitudinal acceleration $a$ and the angular rate of the steering wheel $\omega$. 

In this paper, the focus is on the control design for the center of the front axle of the vehicle, modeled as a \textit{double integrator}:
\begin{equation}\label{eq:double integrator}
		\left\{
		\begin{aligned}
			\dot{p}&=v\\
			\dot{v}&=u
		\end{aligned}
		\right.
	\end{equation}
    where $p\in \mathbb R^2$ and $v\in \mathbb R^2$ are the position and velocity of of the front axle, and $u\in \mathbb R^2$ is the control input to be designed.
As shown in Fig. \ref{fig:bicycle}, the position of the front axle can be presented using the rear axle position
\begin{equation}
p=\begin{bmatrix}x_h\\y_h\end{bmatrix}=\begin{aligned}
 \begin{bmatrix}x_{r} + L\cos(\theta)\\
         y_{r} + L\sin(\theta)\end{bmatrix}.
\end{aligned}
    \end{equation}\label{eq.hand_pos}
     Take the second time derivatives of the above equation, the control input $(a,\omega)$ of the kinematic bicycle model~\eqref{eq.bicycle_model} can be transferred from the control design $u$  of the double integrator model similarly to 
     \cite{chen2024safe}, 
     as follows
\begin{equation}\label{eq.controlH2B}
\begin{aligned}
&\begin{bmatrix}
    a\\
    \omega\\
\end{bmatrix}
 = {\underbrace{\begin{bmatrix}
    \cos(\theta)-\sin(\theta)\tan(\delta) & -v_{r}\sin(\theta)\sec^2(\delta) \\
    \sin(\theta)+\cos(\theta)\tan(\delta) & v_{r}\cos(\theta)\sec^2(\delta) 
\end{bmatrix}}_A}^{-1}\\
& \left(u-
 \begin{bmatrix}
    -\frac{v_{r}^2}{L}\sin(\theta)\tan(\delta)-\frac{v_{r}^2}{L}\cos(\theta)\tan^2(\delta)\\
    \frac{v_{r}^2}{L}\cos(\theta)\tan(\delta)-\frac{v_{r}^2}{L}\sin(\theta)\tan^2(\delta)\\
\end{bmatrix}\right)
\end{aligned}
\end{equation}
This expression is valid as long as $\det A \neq 0$. Direct calculation gives $\det A = v_r \sec^2(\delta)$, so a solution exists as long as $\delta \neq \frac{\pi}{2}$ and $v_r \neq 0$. Physical constraints of the vehicle ensure $\delta < \frac{\pi}{2}$. To guarantee invertibility of $A$ when $|v_r| < \epsilon_v$ (with $\epsilon_v$ a small positive constant), $v_r$ in $A$ is replaced by $\text{sign}(v_r)\,\epsilon_v$.
\subsection{Problem formulation}
In this paper, we consider the problem of an ego vehicle that intends to autonomously merge into a lane occupied by a group of neighboring vehicles whenever an opportunity arises, while maintaining safety at all times. The challenge arises from the fact that the intentions of the neighboring vehicles are unknown. These vehicles may or may not yield space for the ego vehicle to merge.

Given the complexity of the problem, it is assumed that the ego vehicle has already identified the \textit{leader vehicle} to follow in the platoon of neighboring vehicles, in preparation for the merge. As illustrated in Fig.~\ref{fig:3agentB}, \textit{Vehicle 1} is designated as the leader, while \textit{Vehicle 2} is the vehicle immediately following \textit{Vehicle 1} in the same lane before the ego vehicle merges. Additionally, the following assumption is made regarding Vehicles 1 and 2:

\begin{assumption}\label{ass:human-driven}
For each neighboring vehicle $j \in \{1,2\}$, the position $p_j$, velocity $v_j$, and acceleration $u_j$ are bounded for all time. Each vehicle drives in the same lane and maintains a safe distance from other vehicles in that lane.
\end{assumption} 

\begin{figure}[!htb]
	\centering
	\includegraphics[scale = 0.4]{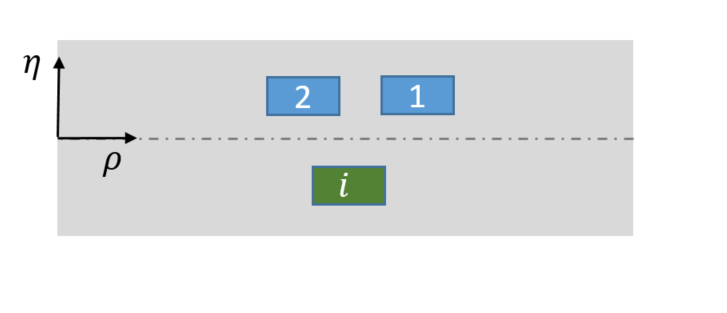}	
	\caption{Illustration of the ego vehicle (green) and neighboring vehicles (blue).}
		\label{fig:3agentB}
\end{figure}

We begin by describing the proposed framework in Fig.~\ref{fig:framework}, after which we present a formal problem formulation.
The first step involves designing an \textit{opinion} that represents the ego vehicle's preference for lane-changing. This opinion is denoted by $z \in \mathbb{R}$ with dynamics
\begin{equation}\label{eq:design-z}
\dot{z} = f_z(z, \mu),
\end{equation}
where $f_z(\cdot)$ is a nonlinear function to be designed, and $\mu \in \mathbb{R}$ is a bifurcation parameter that varies the equilibrium of $z$. Note that a change in the equilibrium corresponds to a continuous switch in the vehicle's opinion.

As the opinion $z$ should naturally be influenced by the physical state of the ego vehicle $x=[p^\top \ v^\top]^\top$ as well as the surrounding environment (i.e., the states of Vehicles 1 and 2), the dynamics of the bifurcation parameter are designed as
\begin{equation}\label{eq:design-mu}
\dot{\mu} = f_\mu(\mu, x, t),
\end{equation}
where $t$ denotes time.

To inform the vehicle dynamics with the opinion, the opinion is used to coordinate the desired state of the ego vehicle, resulting in closed-loop dynamics of the form
\begin{equation}\label{eq:design-x}
\dot{x} = f\big(x, u(x, x^*(z), t)\big),
\end{equation}
where $u(\cdot) \in \mathbb{R}^2$ is the safe motion controller to be designed.

Based on the above framework, the central problem addressed in this paper is the design of (i) suitable opinion dynamics \eqref{eq:design-z}, (ii) bifurcation parameter dynamics \eqref{eq:design-mu}, and~(iii) a safe motion controller $u(\cdot)$, such that the ego vehicle remains safe at all time while successfully executing a merging maneuver between Vehicles 1 and 2 when there is a chance.


\section{Proposed Framework Design}\label{sec:design}

\begin{figure}[t]
	\centering
	\includegraphics[scale = 0.3]{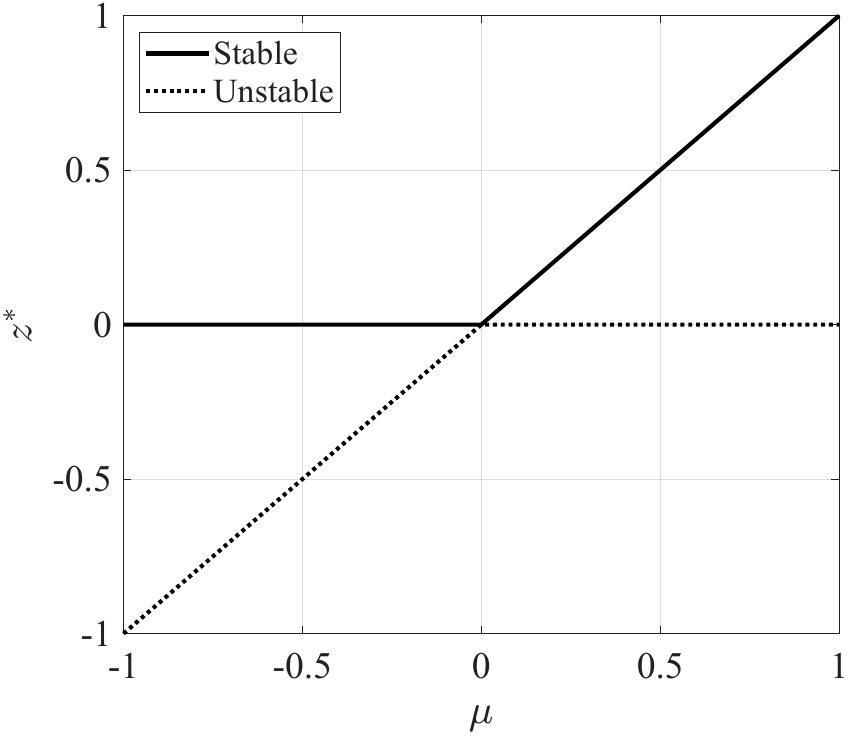}	
	\caption{Bifurcation diagram of the system~\eqref{eq:zi-2}. The Solid line represents stable equilibria, whereas the dotted line represents unstable ones.}
		\label{fig:bifurcation}
\end{figure} 

In this section, the detailed design of the proposed framework is presented.

Recall \eqref{eq:design-z}, let $z = 0$ represents the opinion to stay in the original lane, and $z > \epsilon_{1} > 0$ represents the opinion to change lanes. The opinion dynamics of $z$ is designed as\footnote{Note that this system with $\epsilon = 1$ corresponds to the normal form of a transcritical bifurcation~\cite[Section 4.2, Example 2]{perko2013differential}.} 
\begin{equation} \label{eq:zi-2}
\dot z = \frac{1}{\epsilon} (\mu z - z^2),
\end{equation}
where $\epsilon > 0$.

System \eqref{eq:zi-2} has two equilibrium points, $z^*_a = 0$ and $z^*_b = \mu$ when $\mu \neq 0$, and a unique equilibrium point $z^* = 0$ when $\mu = 0$ (see Fig.~\ref{fig:bifurcation}). The stability properties are summarized in the following lemma for completeness.

\begin{lem}\label{Z}
Consider system \eqref{eq:zi-2} with $\epsilon > 0$. If $z(0) \geq 0$, then $z(t) \geq 0$ for all $t \geq 0$. 

When $\mu \neq 0$, the system has two equilibrium points, $z^*_a = 0$ and $z^*_b = \mu$. 
\begin{itemize}
    \item If $\mu < 0$, $z^*_a$ is locally asymptotically stable, whereas $z^*_b$ is unstable. 
    \item If $\mu > 0$, $z^*_a$ is unstable, whereas $z^*_b$ is locally asymptotically stable. 
\end{itemize}
\end{lem}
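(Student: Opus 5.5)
The plan is to treat $\mu$ as a fixed constant, as the statement implicitly does, so that \eqref{eq:zi-2} is a scalar autonomous ODE with polynomial (hence locally Lipschitz) right-hand side $g(z)=\frac{1}{\epsilon}z(\mu-z)$. Local existence and uniqueness of solutions then follow from the standard Picard--Lindel\"of theorem. All claims will be read off from the factorization of $g$ and from the uniqueness of solutions.

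\textbf{Invariance of $z\ge 0$.} First observe that $z\equiv 0$ is a solution, since $g(0)=0$. By uniqueness, a trajectory with $z(0)>0$ cannot reach $0$ in finite time, because two distinct solutions would then pass through the same point. By continuity, such a trajectory therefore stays strictly positive on its interval of existence, and the case $z(0)=0$ is trivial.

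Alternatively, one can write $\dot z = a(t)z$ with $a(t)=\frac{1}{\epsilon}(\mu - z(t))$, which gives
\[
z(t)=z(0)\exp\Big(\int_0^t a(s)\,ds\Big)\ge 0 .
\]
For completeness I would also check that positive solutions do not escape in finite time. When $z>\max(\mu,0)$ we have $\dot z<0$, so $z(t)\le \max(z(0),\mu)$, and the solution is bounded and exists for all $t\ge0$. This comparison step is the only place requiring any care, and I expect it to be the main, albeit mild, obstacle.

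\textbf{Equilibria and stability.} Setting $g(z)=0$ gives $z(\mu-z)=0$, so the equilibria are $z=0$ and $z=\mu$, and these are distinct exactly when $\mu\neq0$. For stability I would linearize:
\[
g'(z)=\frac{1}{\epsilon}(\mu-2z).
\]
This gives $g'(0)=\mu/\epsilon$ and $g'(\mu)=-\mu/\epsilon$. Since $\epsilon>0$, the sign of $\mu$ determines hyperbolicity.
\begin{itemize}
\item If $\mu<0$, then $g'(0)<0$ and $g'(\mu)>0$, so $0$ is locally asymptotically stable and $\mu$ is unstable.
\item If $\mu>0$, the signs are reversed, so $0$ is unstable and $\mu$ is locally asymptotically stable.
\end{itemize}
These conclusions follow from Lyapunov's indirect method (Hartman--Grobman) for hyperbolic equilibria. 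As a sanity check, the same conclusions can be obtained by a one-dimensional phase-line sign analysis of $g$ on the intervals determined by $0$ and $\mu$.
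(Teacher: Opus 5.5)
Your proposal is correct and follows essentially the same route as the paper: invariance of $z\ge 0$ from the factored form $\dot z=\frac{1}{\epsilon}z(\mu-z)$, equilibria from that factorization, and stability from the sign of the linearization $\frac{1}{\epsilon}(\mu-2z^*)$ at $z^*\in\{0,\mu\}$. Your uniqueness argument, the exponential representation, and the no-finite-escape bound $z(t)\le\max(z(0),\mu)$ make explicit what the paper's one-line invariance argument leaves implicit.
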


\begin{pf}
For $z(0) \ge 0$, the right-hand side of \eqref{eq:zi-2}, $\dot z = \frac{1}{\epsilon} z (\mu - z)$, ensures $z(t) \ge 0$ for all $t \ge 0$.  
Equilibria satisfy $\dot z = 0$, giving $z^*_a = 0$ and $z^*_b = \mu$ when $\mu \neq 0$, and $z^* = 0$ when $\mu = 0$. Linearization yields $\frac{d}{dz}\dot z|_{z^*} = \frac{1}{\epsilon} (\mu - 2 z^*)$, which implies:  
\begin{itemize}
    \item $z^*_a = 0$ is stable if $\mu < 0$ and unstable if $\mu > 0$.
    \item $z^*_b = \mu$ is stable if $\mu > 0$ and unstable if $\mu < 0$.
\end{itemize}
\end{pf}

Next, the bifurcation parameter \eqref{eq:design-mu} is designed with the following dynamics:
\begin{equation}\label{eq:mu}
\dot \mu = -k_\mu\mu + \tanh\Big(-k\, \rho^\top g_1 \, \rho^\top g_2 \, (\bar d_{21} - 2r) \big(\frac{\dot{\bar d}_{21}}{\bar d_{21}} + \epsilon_1 \big) \Big),
\end{equation}
where $\rho \in \mathbb{S}^1$ denotes the constant longitudinal direction of the road, $g_j = \frac{p - p_j}{\|p - p_j\|}$ for $j \in \{1,2\}$ is the unit vector from the ego vehicle to neighboring Vehicle $j$,  
$\bar d_{21} = \|p_2 - p_1\| - r$ with $r>0$ representing a safe margin,  
$\dot{\bar d}_{21} = \bar g_{21}^\top v_{21}$ with $\bar g_{21} = \frac{p_2 - p_1}{\|p_2 - p_1\|}$ and $v_{21} = v_2 - v_1$, $k_\mu$ and $k$ are positive gains and $\epsilon_1$ is a positive scalar.

This design ensures that the opinion $z$ will deviate from its neutral state only if the following conditions are satisfied simultaneously:  
\begin{enumerate}
    \item The ego vehicle is in the correct relative position to merge, i.e., $\rho^\top g_1 \, \rho^\top g_2 < 0$.
    \item There is sufficient space for merging, i.e., $\bar d_{21} > 2r$.
    \item The divergent flow (ratio of relative velocity and relative distance between Vehicles 1 and 2) satisfies $\frac{\dot{\bar d}_{21}}{\bar d_{21}} > -\epsilon_1$.
\end{enumerate}






Finally, the method of dissipative barrier feedback \citep{tang2023collision} is employed for safe control design:
\begin{equation}\label{eq:a}
u = u^n + u^c,
\end{equation}
where the nominal tracking controller is defined as
\begin{equation}\label{eq:a_ni}
u^n = -k_d (e_1 - e_1^*(z)) - k_v \nu_1 + u_1,
\end{equation}
with 
\[
e_1 = p - p_1, \quad \nu_1 = v - v_1,
\]
and 
\begin{equation}\label{eq:e*}
e_1^* = \rho r^\rho + (1 - w(z)) \eta r^\eta
\end{equation}
denoting the desired relative position coordinated with the opinion $z$, where the weight is defined as $w(z) = \tanh(k_w z)$. Here, $\rho \in \mathbb{S}^1$ is the longitudinal direction of the road, $\eta \in \mathbb{S}^1$ is the lateral direction, $r^\rho$ and $r^\eta$ are constant offsets, and $u_1$ is the acceleration of the leader vehicle.  

To ensure safety, the dissipative barrier feedback is designed as
\begin{equation}
u^c = - \sum_{j \in \{1,2\}} k_o g_j \frac{\dot d_j}{d_j},
\end{equation}
where 
\[
d_j = \|p - p_j\| - r, \quad \dot d_j = g_j^\top \nu_j.
\]



The following lemma establishes formal safety guarantees, as well as equilibrium and stability analyses of the closed-loop system under the proposed framework.
\begin{lem} \label{lem:two-agents}
Consider two neighboring vehicles satisfying Assumption~\ref{ass:human-driven}, and an ego vehicle deciding whether to merge between them. The ego vehicle operates under the safe controller~\eqref{eq:a}, the opinion dynamics~\eqref{eq:zi-2}, and the bifurcation dynamics~\eqref{eq:mu}. Suppose the initial position $p(0)$ velocity $v(0)$ are bounded and safe according to Lemma~\ref{lem:boundness of OF} in the Appendix ($d_j(0) > 0$, $\frac{\dot d_j}{d_j}(0)$ bounded for $j \in \{1,2\}$). Assume also that the initial opinion satisfies $z(0) > 0$, and that the gains $k_p$, $k_v$, $k_o$, $k_w$, $k_\mu$, $k$,  and $\epsilon$ are positive and bounded. Then:
\begin{enumerate}
    \item The ego vehicle remains safe for all $t \ge 0$, i.e., $d_j(t) > 0$ and $\frac{\dot d_j}{d_j}(t)$ bounded for $j \in \{1,2\}$.
    \item There exists a sufficiently large gain $k_w > 0$ and a constant $\epsilon^* > 0$ such that, for all $0 < \epsilon < \epsilon^*$, the equilibrium points $(e_1, \nu_1) = (e_1^*(z^*), 0)$ are asymptotically stable, provided that $v_{21}(t)$ converges to zero.
\end{enumerate}
\end{lem}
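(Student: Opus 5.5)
Part 1 (safety) follows from the appendix Lemma~\ref{lem:boundness of OF} on dissipative barrier feedback. The first step is to put the closed loop for the ego vehicle in the form that lemma requires: a double integrator driven by a bounded nominal input plus the barrier term $u^c$. The nominal term is
\[
u^n=-k_d(e_1-e_1^*(z))-k_v\nu_1+u_1 .
\]
It is bounded along any trajectory on which $e_1$ and $\nu_1$ stay bounded, for three reasons. By Lemma~\ref{Z}, $z(t)\ge 0$, so $w(z)=\tanh(k_w z)\in[0,1)$ and hence $e_1^*$ is uniformly bounded. By Assumption~\ref{ass:human-driven}, $u_1$ is bounded. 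And $\mu$ is driven by the bounded input $\tanh(\cdot)$ with the stable term $-k_\mu\mu$, so $|\mu(t)|\le \max\{|\mu(0)|,1/k_\mu\}$; this makes $z$ bounded as well.

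Next, for each $j$, I would differentiate the ratio $\dot d_j/d_j$. Using $\dot g_j=(I-g_jg_j^\top)\nu_j/\|p-p_j\|$, its derivative contains the dominant dissipative term $-k_o(\dot d_j/d_j)/d_j$, coming from $g_j^\top u^c$. The cross term from the other obstacle is bounded because $|g_j^\top g_l|\le 1$. The remaining terms grow at most like $\|\nu_j\|^2/d_j$. Lemma~\ref{lem:boundness of OF} then gives the two conclusions: $\dot d_j/d_j$ stays bounded, and $\ln d_j(t)=\ln d_j(0)+\int_0^t \dot d_j/d_j$ stays finite, so $d_j>0$ on every finite interval. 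I would close this with a standard continuation and contradiction argument: on the maximal interval of existence all signals are bounded, so no finite escape occurs.

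Part 2 (equilibrium and stability) rests on time-scale separation. The opinion equation \eqref{eq:zi-2} has the small parameter $\epsilon$, so $z$ is the fast variable and $(e_1,\nu_1,\mu)$ are slow; I would apply Tikhonov's theorem (singular perturbation, e.g. Khalil Thm.~11.2/11.4). The steps are as follows.

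\textbf{(i) Boundary layer.} Freeze $\mu$. By Lemma~\ref{Z}, since $z(0)>0$, the fast system converges to $z^*=\mu$ if $\mu>0$ and to $z^*=0$ if $\mu<0$. The root is isolated and exponentially stable, with linearization rate $|\mu|/\epsilon$.

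\textbf{(ii) Reduced system.} Substitute $z=h(\mu)$ and use the hypothesis $v_{21}\to0$. Then $\dot{\bar d}_{21}\to0$, so the argument of $\tanh$ in \eqref{eq:mu} approaches a sign determined only by the geometric conditions. Hence $\mu$ converges to a value $\mu^*$ whose sign is fixed by the situation, and $z^*$ is determined by that sign.

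\textbf{(iii) Error dynamics.} With $e_1^*$ frozen at $e_1^*(z^*)$, the error obeys
\[
\ddot{\tilde e}=-k_d\tilde e-k_v\dot{\tilde e}+u^c,\qquad \tilde e=e_1-e_1^*(z^*).
\]
I would take the Lyapunov function $V=\tfrac{k_d}{2}\|\tilde e\|^2+\tfrac12\|\nu_1\|^2$. Its derivative is $-k_v\|\nu_1\|^2+\nu_1^\top u^c$. At the equilibrium $\nu_1=0$, the barrier term satisfies $\dot d_1=0$, and $\dot d_2\to0$ because $v_{21}\to0$, so $u^c$ vanishes there. It also acts as a vanishing perturbation nearby. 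LaSalle-type reasoning then gives asymptotic stability of $(e_1,\nu_1)=(e_1^*(z^*),0)$. The role of a large $k_w$ is that $w(z^*)=\tanh(k_w\mu^*)$ is close to $1$ in the merging case, so the desired position lies in the target lane, well separated from the obstacles, and the barrier term is negligible there.

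The main obstacle is the composite step: proving that the interconnection of the fast $z$, the slow $\mu$ driven by the time-varying $v_{21}(t)$, and the $(e_1,\nu_1)$ dynamics meets Tikhonov's uniformity requirements. Near $\mu=0$ the boundary-layer stability degenerates, since the transcritical point is not hyperbolic. I would first restrict to initial conditions where $|\mu^*|$ is bounded away from zero, giving a uniform exponential rate. I would then treat the vanishing terms from $v_{21}\to0$ as an asymptotically decaying input, so the conclusion becomes asymptotic stability of a cascade. With this, $\epsilon^*$ is chosen from the standard composite Lyapunov bound $\epsilon^*=\alpha_1\alpha_2/(\alpha_1\gamma+\beta_1\beta_2)$.
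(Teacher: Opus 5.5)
\textbf{Part 1 has a genuine gap.} The failing step is ``the cross term from the other obstacle is bounded because $|g_j^\top g_l|\le 1$.'' The cross term in $\ddot d_j$ is $-k_o\,g_j^\top g_l\,\dot d_l/d_l$. The bound on $g_j^\top g_l$ controls only the coefficient, while $\dot d_l/d_l$ is exactly the quantity whose boundedness you are trying to prove.

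Applying the appendix lemma obstacle by obstacle, which requires a bounded forcing, is therefore circular. It is legitimate on $[0,T]$ only if the other distance $d_l$ stays bounded away from zero there, so it covers only the case where one distance collapses while the other stays positive. The case where $d_1$ and $d_2$ go to zero together is not a technicality. Summing the two equations gives
\[
\ddot d_1+\ddot d_2=-k_o(1+g_1^\top g_2)\Bigl(\frac{\dot d_1}{d_1}+\frac{\dot d_2}{d_2}\Bigr)-k_v(\dot d_1+\dot d_2)+\alpha_1+\alpha_2 .
\]
When the ego is squeezed between the two vehicles ($g_1^\top g_2\to -1$), the two barrier terms cancel: repulsion from one vehicle pushes the ego into the other.

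The paper handles this with a case split:
\begin{enumerate}
\item one distance collapses while the other stays positive, where your argument works;
\item both collapse simultaneously, where it uses the inter-vehicle spacing in Assumption~\ref{ass:human-driven} (Vehicles 1 and 2 keep a safe gap) to exclude $g_1^\top g_2\le -1+\epsilon_2$. It then integrates the summed equation, whose barrier coefficient $1+g_1^\top g_2$ is now bounded away from zero, to reach the same $\ln d\to-\infty$ contradiction.
\end{enumerate}
Your continuation argument needs this second case. ``All signals are bounded on the maximal interval'' is exactly what is in question there.

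\textbf{Part 2 follows the paper's route, but one step fails as written.} You use the same Lyapunov function, singular perturbation applied branch-wise away from $\mu=0$, and $v_{21}\to0$ as a decaying cascade input. However, you treat $u^c$ as a negligible or vanishing perturbation, and that would not go through. Its gain is of order $k_o/d_j$. Nothing in the hypotheses makes this small relative to $k_v$: at the merged target $p_1+\rho r^\rho$, the ego sits in Vehicle 1's lane at a fixed offset from it.

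What you are missing is that the barrier term is exactly dissipative. With $v_{21}=0$ one has $\nu_2=\nu_1$, so $\nu_1^\top u^c=-k_o\sum_j (g_j^\top\nu_1)^2/d_j\le 0$. Hence
\[
\dot V=-k_v\|\nu_1\|^2-k_o\sum_j (g_j^\top\nu_1)^2/d_j ,
\]
which is what the paper uses, with $v_{21}$ entering only as the cascade perturbation.

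Two smaller points:
\begin{itemize}
\item A large $k_w$ serves to make $e_1^*(z^*)$ effectively constant, not to separate the target from the obstacles.
\item This $\dot V$ is only semidefinite in $(\tilde e,\nu_1)$. The composite-Lyapunov formula for $\epsilon^*$ that you quote needs a strict Lyapunov function (e.g.\ $V$ plus a small cross term $c\,\tilde e^\top\nu_1$) before it can be invoked.
\end{itemize}
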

\begin{remark}
Note that $v_{21}$ is not required to converge to zero, as assumed in Assumption~\ref{ass:human-driven}. When $v_{21}$ does not converge to zero, the closed-loop system~\eqref{eq:cascaded_i1} is input-to-state stable with respect to $v_{21}$.
\end{remark}
\begin{pf}
Proof of item 1):\\
Recall that $\dot d_{j}=g_{j}^\top \nu_{j}, j\in \{1,2\}$, and hence one verifies that:
\begin{equation}\label{dotdij}
\ddot d_{j}=-k_o\frac{\dot d_{j}}{d_{j}}-k_v\dot d_{j}-k_og_{j}^\top g_{l}\frac{\dot d_{l}}{d_{l}}+\alpha_{j}, l\in \{1,2\}, l\ne j,
\end{equation}
where \begin{align*}
\alpha_{1}&=-g_{1}^\top k_p(e_{1}-e_{1}^*(z))+\frac{\|\pi_{g_{1}}\nu_{1}\|^2}{d_{1}+r}, \\
\alpha_{2}&=-g_{2}^\top(k_p(e_{1}-e_{1}^*(z))+u_2-u_1+v_{21})+\frac{\|\pi_{g_{2}}\nu_{2}\|^2}{d_{2}+r}
\end{align*}
 with $v_{21}=v_2-v_1$ and $\pi_{y}=I-yy^\top$ the projection operator for $S^1$. 
We prove that $d_j$ remains positive using contradiction, in two cases.

i) 
Assume that one of $d_{1}$ or $d_{2}$ approaches zero at a finite time $T$, while the other remains positive on $[0,T]$.

Integrating \eqref{dotdij} from $0$ to $T$ gives:
\begin{equation}\label{eq:int_dij}
\begin{aligned}
&k_o(\ln d_{j}(T)-\ln d_{j}(0))\\
&=-\dot d_{j}(T)+\dot d_{j}(0)-k_v(d_{j}(T)-d_{j}(0))\\
&+\int_0^T(\alpha_{j}-k_og_{j}^\top g_{l}\frac{\dot d_l}{d_l})d\tau
\end{aligned}
\end{equation}

The left-hand side of the equation tends to negative infinity, however the right-hand side of the equation is either bounded or tends to positive infinity. This is because as $d_{j}$ approaches zero, $\dot d_{j}$ is either bounded or negative infinity; $e_{1}$,  $e_{1}^*(z)$, and $d_l \ (l\in \{1,2\}, l\ne j$) remain positive, hence  $\frac{\dot d_l}{d_l}$ is also bounded; $u_2$ and $v_{12}$ are bounded by Assumption \ref{ass:human-driven}, and $\alpha_{j}$ is either bounded or tends to positive inifinity. This yields a contradiction.

ii) Assume now that $d_{1}$ and $d_{2}$ approach zero simultaneously. 
Summing $\ddot d_{1}$ and $\ddot d_{2}$ yields:
\begin{equation}\label{eq:sumd}
\ddot d_{1}+\ddot d_{2}=-k_o(1+g_{1}^\top g_{2})(\frac{\dot d_{1}}{d_{1}}+\frac{\dot d_{2}}{d_{2}})-k_v(\dot d_{1}+\dot d_{2})+\alpha_{1}+\alpha_{2}
\end{equation}
Using the fact that $|g_{1}^\top g_{2}|\le 1$ and Under Assumption~\ref{ass:human-driven}, if  
$-1 \le g_{1}^\top g_{2} \le -1+\epsilon_2$ (for some small $\epsilon_2>0$),  
the ego vehicle has already merged, so $d_{1}$ and $d_{2}$ cannot both go to zero simultaneously. Hence, we only consider $g_{1}^\top g_{2} > -1+\epsilon_2$ on $[0,T]$.
 
 Integrating \eqref{eq:sumd}, it yields:
\begin{equation}
\begin{aligned}
&\beta\sum_{j\in \{1,2\}}k_o(\ln d_{j}(T)-\ln d_{j}(0))\\
&=\sum_{j\in\{1,2\}}\big(-\dot d_{j}(T)+\dot d_{j}(0)-k_vd_{j}(T)\\
&+k_vd_{j}(0)+\int_0^T\alpha_{j}d\tau\big)
\end{aligned}
\end{equation}
where $\beta \in [\min(1-g_{1}^\top g_{2}), \max(1-g_{1}^\top g_{2})]$ is a positive scalar, such that :

	\scalebox{0.9}{
		$\int^T_0 (1+g_{1}^\top g_{2})(\frac{\dot d_{1}} {d_{1}}+\frac{\dot d_{2}} {d_{2}}) d\tau =\beta\sum_{j\in \{1,2\}}(\ln d_{j}(T)-\ln d_{j}(0)) $.}
        
Using the same contradiction argument as for \eqref{eq:int_dij}, we conclude that $d_{1}$ and $d_{2}$ cannot approach zero simultaneously.      

Proof of item 2):\\
Consider the dynamics of $(e_{1},\nu_{1})$:
\begin{equation}\label{eq:cascaded_i1}
\left\{
\begin{aligned}
\dot e_{1}=&\nu_{1}\\
\dot \nu_{1}=&-k_p(e_{1}-e_{1}^*(z))-k_v\nu_{1}-k_o\sum_{j\in\{1,2\}} \frac{g_{j}g_{j}^\top \nu_{j}}{d_{j}}.
\end{aligned}
\right.
\end{equation}
Using Lemma~\ref{Z}, substitute $z^{*}\in\{0,\mu\}$ into \eqref{eq:cascaded_i1}, it yields:
\begin{equation}\label{eq:cascaded_z*}
\left\{
\begin{aligned}
\dot e_{1}=&\nu_{1}\\
\dot \nu_{1}=&-k_p(e_{1}-e_{1}^*(z^*))-k_v\nu_{1}-k_o\sum_{j\in\{1,2\}} \frac{g_{j}g_{j}^\top \nu_{j}}{d_{j}}
\end{aligned}
\right.
\end{equation}
Since $\nu_{2}=\nu_{1}-v_{21}$, the system \eqref{eq:cascaded_z*} is a cascaded system perturbed by $v_{21}$.  The unforced subsystem is:
\begin{equation}\label{eq:unforced_i1}
\left\{
\begin{aligned}
\dot e_{1}=&\nu_{1}\\
\dot \nu_{1}=&-k_p(e_{1}-e_{1}^*(z^*))-k_v\nu_{1}-k_o\sum_{j\in\{1,2\}} \frac{g_{j}g_{j}^\top \nu_{1}}{d_{j}}
\end{aligned}
\right.
\end{equation}

Consider the following Lyapunov function candidate
\begin{equation}
\mathcal L=\frac{k_p}2 \|e_{1}-e_{1}^*(z^*)\|^2+\frac 1 2 \|\nu_{1}\|^2
\end{equation}
Recall \eqref{eq:e*} and using the fact that $k_w$ is large enough, one has $w(z^*_{i_a})=0$ and $w(z^*_{i_b})\approx 1$ and hence one can consider $e_{i1}^*(z^*_{i_a})=\rho r^\rho +\eta r^\eta$ and $e_{i1}^*(z^*_{i_b})=\rho r^\rho$ as constant. In this setting, one verifies that:
\begin{equation}
\dot {\mathcal L}=-k_v\|\nu_{1}\|^2-k_o\sum_{j\in \{1,2\}}|g_{j}^\top \nu_{1}|^2/d_{j}\le 0
\end{equation}
which implies boundedness of $(e_{1},\nu_{1})$, as long as $d_{j}>0$. Boundedness of $\dot\nu_{1}$, $\dot g_{j}$, and $\dot d_{j}$ ensures that $\ddot{\mathcal L}$ is bounded. 

 Using Barbalet's lemma, one concludes that the equilibrium points $(e_{1},\nu_{1})=(e_{1}(z^*), 0)$ of the unforced system \eqref{eq:unforced_i1} are asymptotically stable. If $v_{21}\to 0$, the same conclusion holds for the cascaded system \eqref{eq:cascaded_z*}.

Finally, define $\xi=[e_{1}^\top\ \nu_{1}^\top]^\top$ and write the singular perturbation form
\begin{equation}
\begin{aligned}
\dot \xi&=f_e(t,\xi,z,\epsilon)\\
\epsilon\dot z&=f_z(t,z,\mu,\epsilon)\\
\dot \mu&=f_\mu(t,x_e,\epsilon).
\end{aligned}
\end{equation}

Using Lemma~\ref{Z}, the stability of \eqref{eq:cascaded_z*}, and the boundedness of
\(f_e\), \(f_z\), an argument analogous
to \cite[Section 7.5, Theorem 5.1]{kokotovic1986singular} implies that, for sufficiently small
\(\epsilon>0\), the equilibrium
\[
(e_1,\nu_1)
=
(e_1^*(z^*),0)
\]
of the full system \eqref{eq:cascaded_i1} is locally asymptotically stable on
each stable opinion branch, provided \(v_{21}\to0\). This conclusion is applied branch-wise away from the bifurcation point
\(\mu=0\), where the stable equilibrium of the fast opinion dynamics is locally
asymptotically stable.



\end{pf}




\section{Numerical results}\label{sec:simu}
This section presents a numerical simulation of the proposed framework. To demonstrate the advantages of the method, we consider an aggressive driving behavior for Vehicle~2, in which it intermittently changes its intention between yielding and not yielding to the ego vehicle. Operationally, this corresponds to Vehicle~2 switching between deceleration and acceleration. The gains and parameters are chosen as $k_p=0.7, k_v=2, k_o=1, k_w=40, k_\mu=5, k=20, \epsilon=0.05, r=0.6, \epsilon_1=0.5$. The animation illustrating the interaction among Vehicle 1, Vehicle 2, and the ego vehicle under the proposed control framework is available at: \textcolor{blue}{https://bit.ly/3Kh6tHf}.
Figure~\ref{fig:z} shows the evolution of the opinion state 
$z$, together with the ego vehicle’s lateral position along the road. From the video and Fig.~\ref{fig:z}, we observe that when $t<10$, the opinion state transitions continuously from a small positive value toward zero in response to the aggressive behavior of Vehicle 2, which alternates between yielding and not yielding to the ego vehicle. After $t=20$, the ego vehicle merges between Vehicle 1 and 2. Owing to the time-scale separation in the design, the physical state evolves smoothly and does not exhibit oscillatory behavior. Finally, the comparison between Fig.~\ref{fig:d} and Fig.~\ref{fig:d-collide} demonstrates the effectiveness of the proposed framework in achieving reactive collision avoidance in a dynamic environment.

\section{Conclusions}\label{sec:conclusion}
This work presented an adaptive embodied decision-making framework that integrates nonlinear opinion dynamics with the safety-critical control of a nonholonomic autonomous vehicle. By coupling decision states with physical dynamics, the proposed method enables an ego vehicle to react to interaction uncertainty and adjust its behavior accordingly while maintaining formal safety guarantees. Analytical results established equilibrium properties and closed-loop stability, and numerical simulations demonstrated effective behavior adaptation and safe motion execution in dynamic non-cooperative environments. Future work will extend this framework to multi-vehicle coordination and more complex interaction scenarios.
\begin{figure}[t]
	\centering
	\includegraphics[scale = 0.4]{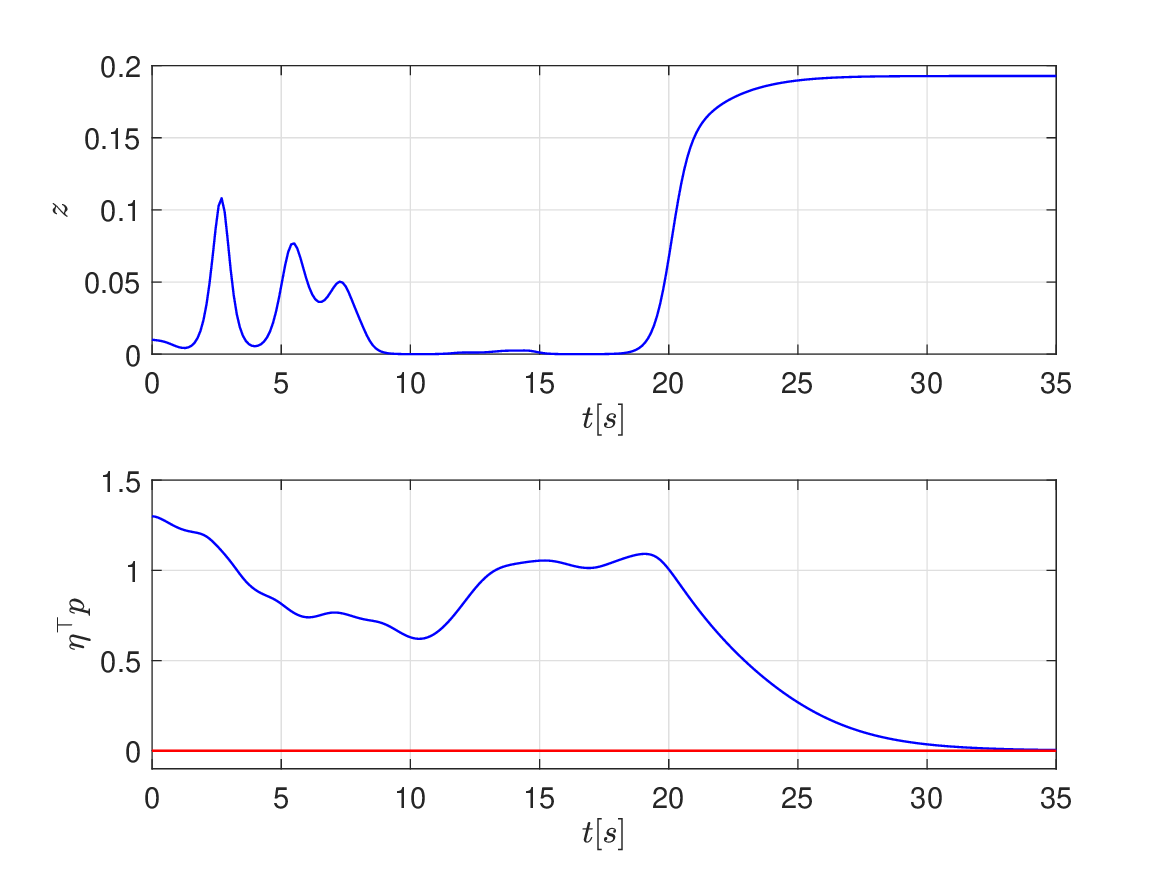}	
	\caption{Evolution of opinion state $z$ and the position of ego vehicle projected on the lateral direction of the road $\eta^\top p$. The red line is the lateral position of the lane containing Vehicle 1 and 2.}
		\label{fig:z}
\end{figure} 

\begin{figure}[t]
	\centering
	\includegraphics[scale = 0.4]{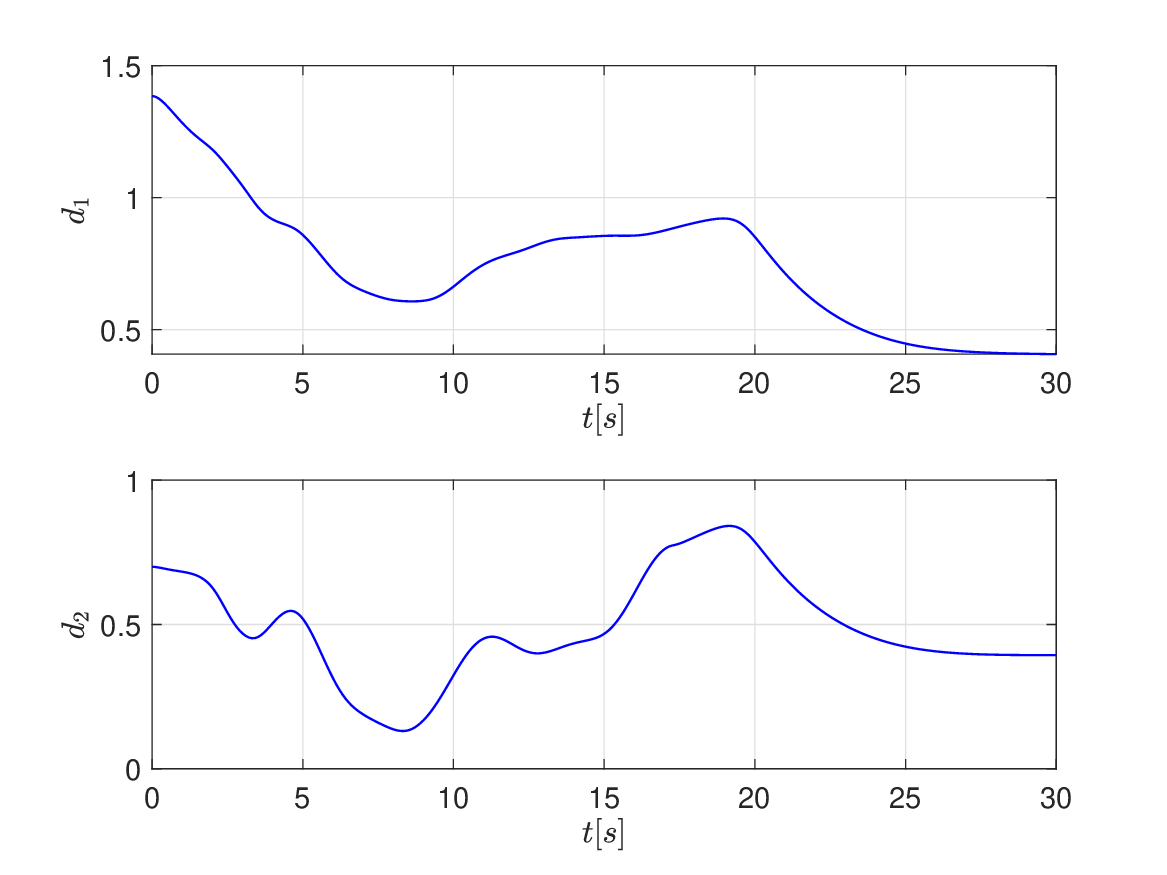}	
	\caption{Evolution of the safety distance $d_1$ and $d_2$ under the proposed framework with collision avoidance term $u_c$. $d_1$ and $d_2$ remain positive.}
		\label{fig:d}
\end{figure} 

\begin{figure}[t]
	\centering
	\includegraphics[scale = 0.4]{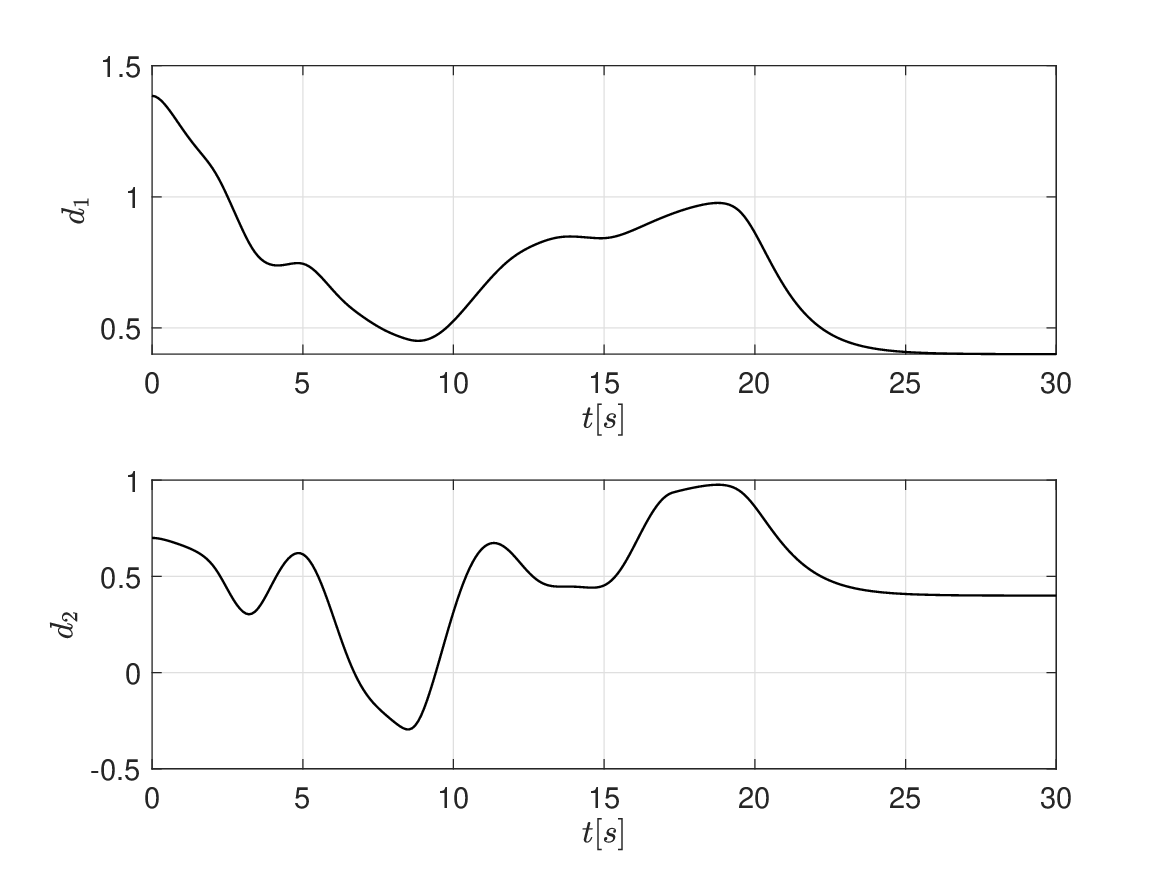}	
	\caption{Evolution of the safety distance $d_1$ and $d_2$ under the proposed framework without collision avoidance term $u_c$ ($k_o=0$). The safety constraint is violated ($d_2$ is negative) at some point between $t=5$ and $t=10$. }
		\label{fig:d-collide}
\end{figure} 


\section*{DECLARATION OF GENERATIVE AI AND AI-ASSISTED TECHNOLOGIES IN THE WRITING PROCESS}
During the preparation of this work, the authors used ChatGPT in order to improve the language of some paragraphs. After using this tool/service, the authors reviewed and edited the content as needed and take full responsibility for the content of the publication.

\bibliography{bibliography}             

\appendix
\section*{Appendix}

\begin{lem} \label{lem:boundness of OF}
Given the dynamics 
\begin{equation} \label{ddot_d}
		\ddot d=-k_o\frac{\dot d }{d} -\alpha(t) 
	\end{equation}
			with $k_o$ a positive gain and $\alpha(t)$ a continuous and bounded function. Then for any initial condition satisfying $d(0)>0$ and $\phi(0)=\frac{\dot d(0)} {d(0)}$ bounded, the following assertions hold:
			\begin{enumerate}
				\item $d$ remains positive, $\forall t\ge 0$.
				\item $d$ converges to zero as $t\to \infty$ if and only if $\lim_{t \to \infty} \int^\top_0 \alpha(\tau) d\tau \to +\infty$. 
				\item If $d$ converges to zero, then $\dot{d}$ is bounded and converges to zero, and $\phi(t)$ remains bounded, $\forall t\ge 0$. Furthermore, if $\alpha(t)$ converges to a positive constant $\alpha^0>\epsilon >0$, then $\frac{\dot d}{d}\to -\frac{\alpha^0}{k_o}$ and hence $\ddot d$ converges to zero.
			\end{enumerate}
	\end{lem}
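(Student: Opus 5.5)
The plan is to work with the scalar variable $\phi=\dot d/d$ and treat \eqref{ddot_d} as a first-order relation. Dividing \eqref{ddot_d} by $d$ is awkward, so I integrate it directly instead. On any interval $[0,T)$ where $d>0$,
\[
\dot d(T)-\dot d(0)=-k_o\big(\ln d(T)-\ln d(0)\big)-\int_0^T\alpha(\tau)\,d\tau .
\]
This identity drives all three items.

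\textbf{Item 1.} I argue by contradiction. Suppose $d(T)\to0$ at a finite $T$. Then $-k_o\ln d(T)\to+\infty$, while $\int_0^T\alpha\,d\tau$ is finite because $\alpha$ is bounded and $T$ is finite. The identity then forces $\dot d(T)\to+\infty$. But a function decreasing to zero must have $\dot d<0$ along a sequence of times approaching $T$. To make this rigorous, evaluate the identity on the set where $d$ attains new minima, where $\dot d\le 0$. This contradicts $\dot d\to+\infty$, so $d>0$ for all $t\ge0$.

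\textbf{Item 2.} I read the identity as $k_o\ln d(t)=k_o\ln d(0)+\dot d(0)-\dot d(t)-\int_0^t\alpha$.
\begin{itemize}
\item For sufficiency: if $\int_0^t\alpha\to+\infty$, I first show $\dot d$ is bounded above. Positivity of $d$, together with \eqref{ddot_d}, gives $\ddot d\le \|\alpha\|_\infty$ whenever $\dot d\ge0$, and the damping term $-k_o\dot d/d$ then caps the growth. Hence $\ln d\to-\infty$.
\item For necessity: if $d\to0$, then $-\dot d(t)-\int_0^t\alpha\to-\infty$. I show that $\dot d$ is bounded, which is part of item 3 and is proved independently. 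This forces $\int_0^t\alpha\to+\infty$.
\end{itemize}

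\textbf{Item 3.} I use the $\phi$ dynamics,
\[
\dot\phi=\frac{\ddot d}{d}-\phi^2=-\frac{k_o\phi+\alpha}{d}-\phi^2 .
\]
When $d$ is small and $\phi$ is very negative, the term $-k_o\phi/d$ dominates, both positive and large. This makes $\{\phi\ge -c\}$ invariant for $c$ of order $\|\alpha\|_\infty/k_o$ plus margin. A symmetric argument for large positive $\phi$ uses the term $-\phi^2$. Together these give a bound on $\phi$, so $|\dot d|=|\phi|d\to0$.

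If $\alpha\to\alpha^0>0$, the fast term $-(k_o\phi+\alpha)/d$ with $d\to0$ acts as a high-gain contraction toward $\phi=-\alpha/k_o$. A singular-perturbation or comparison argument then yields $\phi\to-\alpha^0/k_o$. Consequently $\ddot d=-k_o\phi-\alpha\to0$.

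\textbf{Main obstacle.} I expect the uniform bound on $\phi$ in item 3 to be the hardest step, since $\phi$ is not a priori bounded. I would try to resolve it first via the invariance argument with explicit thresholds on $\phi$.
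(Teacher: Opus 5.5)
The paper gives no proof of this lemma (it defers to \cite{tang2023collision}). Your item 1 uses the same integrated-identity contradiction the paper uses for item 1) of Lemma~\ref{lem:two-agents}, and it is fine. Items 2 and 3, however, contain two genuine gaps.

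\textbf{The lower bound on $\phi$ in item 3.} The claim that $-k_o\phi/d$ dominates when $\phi$ is very negative does not hold in general. For $\phi<0$,
\[
\dot\phi=\frac{1}{d}\Big(|\phi|\big(k_o-|\dot d|\big)-\alpha\Big),
\]
so the positive term wins only if $|\dot d|=|\phi|\,d<k_o$. At fixed small $d$, a sufficiently negative $\phi$ means $|\dot d|>k_o$, and then $-\phi^2$ wins. Equivalently, the boundary $\phi=-c$ is guaranteed to point inward only when $d\le(k_oc-\|\alpha\|_\infty)/c^2\approx k_o/c$ for large $c$. This holds eventually for fixed $c$, but nothing you prove ensures $\phi\ge-c$ at the time $d$ has become that small, and enlarging $c$ only shrinks the admissible $d$. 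Closing the argument needs an a priori bound $|\dot d|<k_o$, which you intend to deduce \emph{from} the bound on $\phi$, so the reasoning is circular. Your necessity proof in item 2 relies on ``$\dot d$ bounded'' from item 3, so it inherits this gap.

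\textbf{Sufficiency in item 2.} Here you bound the wrong side. In
$k_o\ln d(t)=k_o\ln d(0)+\dot d(0)-\dot d(t)-\int_0^t\alpha(\tau)\,\mathrm d\tau$ with $\int_0^t\alpha(\tau)\,\mathrm d\tau\to+\infty$, you need $\dot d$ bounded \emph{below}; an upper bound gives nothing. Moreover, ``the damping caps the growth'' needs $d$ bounded above, not merely positive, because the coefficient $k_o/d$ vanishes as $d$ grows. For constant $\alpha<0$, indeed $\dot d\to+\infty$.

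\textbf{The missing ingredient} is a direct damping estimate on $\dot d$. From \eqref{ddot_d}, $\tfrac{\mathrm d}{\mathrm dt}|\dot d|\le-\tfrac{k_o}{d}|\dot d|+\|\alpha\|_\infty$ in the Dini sense. Hence, on any interval $[t_0,t]$ on which $d\le D$,
\[
|\dot d(t)|\le e^{-k_o(t-t_0)/D}\,|\dot d(t_0)|+\frac{D\,\|\alpha\|_\infty}{k_o}.
\]
This closes both gaps and the remaining steps:
\begin{itemize}
\item \emph{Necessity and item 3's claims on $\dot d$.} If $d\to0$, applying the estimate on tails where $D$ is arbitrarily small gives $\dot d$ bounded and $\dot d\to0$ directly. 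This settles necessity in item 2.
\item \emph{Lower bound on $\phi$.} Once $|\dot d|\le k_o/2$, the display for $\dot\phi$ gives $\dot\phi\ge\frac1d\big(\frac{k_o}{2}|\phi|-\|\alpha\|_\infty\big)>0$ whenever $\phi<-2\|\alpha\|_\infty/k_o$. Your upper bound is fine, since both terms of $\dot\phi$ are negative for $\phi>\|\alpha\|_\infty/k_o$.
\item \emph{Sufficiency.} Read your identity as the first-order ODE $\dot d=F(t)-k_o\ln d$, with $F(t)=\dot d(0)+k_o\ln d(0)-\int_0^t\alpha(\tau)\,\mathrm d\tau$. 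Since $F$ is bounded above, $d\le\max\{d(0),e^{\sup F/k_o}\}$, and the damping estimate then bounds $\dot d$ from below. Alternatively, since $F\to-\infty$, for each $\varepsilon>0$ eventually $\dot d\le-1$ whenever $d\ge\varepsilon$, which gives $d\to0$ at once.
\item \emph{Limit of $\phi$.} Your last step ($\phi\to-\alpha^0/k_o$ by high-gain contraction) is then fine. With $e=\phi+\alpha^0/k_o$ one has $\tfrac{\mathrm d}{\mathrm dt}|e|\le-\tfrac{k_o}{d}|e|+\tfrac{|\alpha-\alpha^0|}{d}+\phi^2$.
\end{itemize}
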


This lemma shows that as long as the initial distance $d(0)$ is positive and $\phi(0)$ is bounded, then $d(t)$ will never cross zero, and $\phi(t)$ remains bounded. The proof of this Lemma can be found in \cite{tang2023collision}.

\end{document}